\providecommand{\half}{\nicefrac{1}{2}}
\newcommand{\Sph}{\mathbb{S}}
\newcommand{\copyrightstatement}{
	\begin{textblock}{0.84}(0.08,0.01)    
		\noindent
		\footnotesize
		\textcopyright~2021~IEEE.  Personal  use  of  this  material  is  permitted.
		Permission from IEEE must be obtained for all other uses, in any current or
		future media, including reprinting/republishing this material for advertising
		or promotional purposes, creating new collective works, for resale or
		redistribution to servers or lists, or reuse of any copyrighted component of
		this work in other works.
	\end{textblock}
}
\begin{document}
\ninept

\copyrightstatement

\title{Refinement of Direction of Arrival Estimators by Majorization-Minimization Optimization on the Array Manifold}

\name{Robin Scheibler and Masahito Togami%
\thanks{
Code available at \protect\url{https://github.com/fakufaku/doamm}.%
}}
\address{LINE Corporation, Tokyo, Japan}

%

\maketitle

\begin{abstract}
  We propose a generalized formulation of direction of arrival estimation that includes many existing methods such as steered response power, subspace, coherent and incoherent, as well as speech sparsity-based methods.
  Unlike most conventional methods that rely exclusively on grid search, we introduce a continuous optimization algorithm to refine DOA estimates beyond the resolution of the initial grid.
  The algorithm is derived from the majorization-minimization (MM) technique.
  We derive two surrogate functions, one quadratic and one linear.
  Both lead to efficient iterative algorithms that do not require hyperparameters, such as step size, and ensure that the DOA estimates never leave the array manifold, without the need for a projection step.
  In numerical experiments, we show that the accuracy after a few iterations of the MM algorithm nearly removes dependency on the resolution of the initial grid used.
  We find that the quadratic surrogate function leads to very fast convergence, but the simplicity of the linear algorithm is very attractive, and the performance gap small.
\end{abstract}
\begin{keywords}%
direction of arrival, optimization, majorization minimization, array signal processing
\end{keywords}


\section{Introduction}

Direction of arrival (DOA) estimation is a technique to spatially localize sources using an array of sensors~\cite{Krim:1996cq}.
The focus of this paper is on acoustic sound source localization, where sensors are microphones~\cite{Brandstein:2010uh}.
However, we note that DOA estimation is also widely applied in radar~\cite{Vasanelli:2020gc} and sonar~\cite{Grall:2020ik} technologies.

There has been a tremendous amount of work on DOA over the years.
Steered response power (SRP) methods create beams in multiple directions to find those with largest power~\cite{Capon:1969da,DiBiase:2000uv,Tashev:2009vj}.
Subspace methods, such as the famous MUSIC algorithm, identify source directions due to their orthogonality to the noise subspace~\cite{Schmidt:1986js}.
An important development is that of sparsity-based methods, starting with~\cite{Malioutov:2005jw}.
All the methods so far operate on a grid and their accuracy is limited by its coarseness.
Grid-search is costly in terms of processing power and memory to store all the steering vectors.
These problems become acute for full three dimensional localization where accuracy within a degrees requires over 30000 grid points.
The requirements are then multiplied by the number of frequency sub-bands and sensors used.
As a mitigation, so-called off-the-grid methods combine sparse optimization with a continuous re-gridding step~\cite{Gretsistas:2012tj,Zhang:2019jb}.
For some special array geometries, e.g. linear, the MUSIC criterion can be solved by polynomial rooting~\cite{Barabell:1983dm,Bresler:1986hqa}.
Extensions to arbitrary arrays and the wideband setting exists via array interpolation~\cite{Friedlander:1993ew} and finite rate of innovation~\cite{Pan:2017bm,Pan:2019dv}.
Recently, several deep learning based systems have been proposed~\cite{Adavanne:2019bz,Chakrabarty:2019hm}.

While more complex methods lead to improved accuracy and performance, classic methods such as SRP and MUSIC are reliable, easy to implement, and widely deployed in the industry.
In this work, we introduce a simple algorithm to refine coarse grid estimates produced by these methods using continuous domain optimization based on the majorization-minimization (MM) technique~\cite{Lange:2016wp,Sun:2017hb}.
Recently, the algorithm SPIRE-MM~\cite{Togami:2020tp} has been proposed for robust speech DOA estimation.
It introduced MM optimization refinement of time-frequency bin-wise DOA estimators with a surrogate function based on a cosine inequality~\cite{Yamaoka:2019wb}.
We propose a generalization of this MM algorithm applicable to most of the SRP and subspace based methods.
We formulate the problem as the minimization of the \textit{power mean}~\cite{Xu:2019ek} of the cost at the different frequency bands.
We produce two surrogate functions, one quadratic, similar to~\cite{Togami:2020tp}, and one linear.
The former can be minimized efficiently by the generalized trust region sub-problem method~\cite{MORE:1993dy}, as in~\cite{Togami:2020tp}, and the latter has a closed-form solution.
Both lead to iterative algorithms that are guaranteed to decrease the value of the objective.
They estimate DOA vectors directly on the sphere, without any projection, and do not require hyperparameters.
One iteration has the same computational cost as the evaluation of one grid point.

We study the performance of the refinement method for SRP-PHAT~\cite{DiBiase:2000uv} and MUSIC~\cite{Schmidt:1986js} on both simulated and recorded datasets.
We find that the median accuracy with initial estimates computed on a 100-points grid is better than that of a 10000-points grid.
Furthermore, the number of iteration required is around 10 for the quadratic surrogate, and 30 for the linear one.

\section{Background}
\seclabel{background}

We consider an array of $M$ sensors with known locations $\vd_m\in\R^3$, $m=1,\ldots,M$.
We work in the time-frequency domain by applying a short-time Fourier transform~\cite{Allen:1977in} to the input time domain signals.
The signal model of the vector containing the sensor measurements at frequency band $k=1,\ldots,K$ and frame $n=1,\ldots,N$, is 
\begin{equation}
  \vx_{kn} = \sum\nolimits_{\ell=1}^L \va_k(\vq_\ell) y_{\ell kn} + \vb_{kn},
  \elabel{signal_model}
\end{equation}
with $\vx_{kn} \in \C^M$, the $\ell$th source signal $y_{\ell kn} \in \C$, and the noise $\vb_{kn} \in \C^M$.
The vector $\vq_\ell\in \R^3$ is a unit-length vector pointing towards source $\ell$.
In terms of colatitude $\theta$ and azimuth $\varphi$, we have $\vq = \begin{bmatrix} \cos \varphi \sin\theta & \sin\varphi \sin\theta & \cos\theta \end{bmatrix}^\top$.
The steering vectors are 
\begin{align}
  \va_k(\vq) & = M^{-\half} \Big[\, e^{j\omega_k \vd_1^\top \vq} \quad \cdots \quad e^{j\omega_k \vd_M^\top \vq}\, \Big]^\top.
\end{align}
with $\omega_k = 2\pi \frac{f_k}{c}$ where $f_k$ is the center frequency of the $k$th band and $c$ the speed of sound.

In the rest of this manuscript, we denote vectors and matrices by bold lower and upper case letters, respectively.
Furthermore, $\mA^\top$ and $\mA^\H$ denote the transpose and conjugate transpose, respectively of matrix $\mA$.
The Euclidean norm of vector $\vx$ is written $\|\vx\| = (\vx^\H\vx)^{\half}$.
Unless specified otherwise, indices $m$, $k$, $n$, and $\ell$ take the ranges defined here.
We use $\R_+$ to denote the set of non-negative real numbers.
The $(d-1)$-sphere is $\Sph^{d-1} = \{\vu \,|\, \vu \in \R^d, \| \vu\| = 1\}$.
The DOA vectors belong to the 2-sphere, i.e. $\vq \in \Sph^2$.
Finally, we define the unnormalized sinc function as $\sinc(x) = \sin(x) / x$.

\subsection{Covariance-based DOA Estimators}
\seclabel{cov_based_doa}

Many of the conventional DOA estimators can be described as finding the local maxima, one per source, of the function,
\begin{align}
  \calJ(\vq) = \sum\nolimits_{k=1}^K (\va_k(\vq)^\H \mC_k \va_k(\vq))^s,\quad \vq \in \Sph^2,
  \elabel{general_doa_estimator}
\end{align}
where $\mC_k\in\C^{M\times M}$ is a Hermitian positive semi-definite matrix related to the covariance of the input channels and $s$ is a real exponent.
If we consider $K=1$, this is a narrowband estimator.
When working with the sum of the objectives of narrow-bands, i.e. $K>1$, these methods are known as \textit{incoherent}.
See \cite{Krim:1996cq,Tashev:2009vj} for more details.

\textbf{The Steered Response Power} family of algorithms is obtained with $\mC_k = \mS_k$, a weighted sample covariance matrix of \eref{signal_model},
\begin{align}
  \mS_k = N^{-1} \sum\nolimits_{n=1}^N \tilde{\vx}_{kn} \tilde{\vx}_{kn}^\H,
  \elabel{input_covariance}
\end{align}
and $s=1$.
The vector $\tilde{\vx}_{kn}$ is a weighted version of $\vx_{kn}$, i.e. $\tilde{x}_{mkn} = w_{mkn} x_{mkn}$.
Choosing $w_{mkn} = 1$ or $w_{mkn} = |x_{mkn}|^{-1}$  yield the conventional SRP or SRP-PHAT estimators~\cite{DiBiase:2000uv}, respectively.
Other weighting schemes used for the generalized cross-correlation (GCC)~\cite{Knapp:1976el}, such as SCOT, are also possible.

\textbf{MUSIC} ~\cite{Schmidt:1986js,Barabell:1983dm,Friedlander:1993ew} considers the covariance matrix of \eref{signal_model}, i.e.,
\begin{equation}
  \mR_{k} = \Expect{\vx_{kn} \vx_{kn}^\H} = \sum\nolimits_\ell \Expect{|y_{\ell kn}|^2} \va_k(\vq_\ell)\va_k(\vq_\ell)^\H + \mB_{k},
  \nonumber
\end{equation}
where $\mB_k$ is the covariance matrix of the noise.
Now, let the eigenvalue decomposition of $\mS_k \approx \mR_k$ (with $w_{kn} = 1$) be $\mS_k = \mU_k \mLambda_k \mU_k^\H$, where $\mU_k$ contains the eigenvectors of $\mS_{k}$ and $\mLambda_k$ is the diagonal matrix containing the eigenvalues.
MUSIC decomposes $\mU_k$ into signal and noise subspaces, $\mU_k = [\mG_{k}\  \mE_{k}]$.
These subspaces are assumed orthogonal, i.e.  $\mE_{k}^\H \va_k(\vq_\ell) = 0$, for all $\ell$, and the estimator uses $\mC_k = \mE_{k} \mE_{k}^\H$ and $s=-1$.

\textbf{The MVDR} estimator, also known as \textit{Capon}, is obtained by taking $\mC_k = \mS_k^{-1}$ (with $w_{kn}=1$) and $s=-1$.

\textbf{Wideband Coherent Subspace Methods}, e.g.,~\cite{Wang:1985jv,diClaudio:2001bb}, construct an estimate of the covariance matrix at one frequency $f_{k^\prime}$ using the data from \textit{all} $K$ frequencies.
From this estimate, they obtain a basis $\mE_{k^\prime}$ for the noise subspace at $f_{k^\prime}$ and the DOA is obtained from \eref{general_doa_estimator} with $\mC_k = \mE_{k^\prime} \mE_{k^\prime}^\H$ and $s=-1$.

\textbf{Robust Speech DOA Methods}, e.g., \cite{Togami:2007vp,Togami:2007jm,Togami:2020tp}, are robust DOA estimators exploiting the sparseness of speech signals.
Namely, they assume so-called \textit{W-disjoint} orthogonality whereas each time-frequency bin is occupied by at most one source.
They share the following general structure.
First, for all $k, n$, compute the local DOA estimate $\vq_{kn}$ with \eref{general_doa_estimator} with $s=1$ and $\mC_k = \tilde{\vx}_{kn} \tilde{\vx}_{kn}^\H$, where $\tilde{\vx} = \vx_{kn} / |\vx_{kn}|$.
Then, compute the histogram of $\vq_{kn}$ on a finer grid.
Pick the peaks of the histogram as source locations.

\subsection{MM Algorithm for unit-min-sum-cos}
\seclabel{unit_min_sum_cos}

\begin{definition}[unit-min-sum-cos]
  The unit-min-sum-cos problem is the minimization of a sum of cosine subject to a unit norm constraint,
  \begin{equation}
    \min_{\vq\in \Sph^{d-1}} \sum_{p\in\calP} u_p \cos(\psi_p - \vb_p^\top \vq),
    \elabel{unit-min-sum-cos}
  \end{equation}
  where $u_p \in\R_+$, $\psi_p\in\R$, $\vb_p \in \R^d$, and $\calP$ some index set.
\end{definition}
Because $\mC_k$ is positive semi-definite and $\va_k(\vq)$ are complex exponentials, the terms $\va_k(\vq)^\H \mC_k \va_k(\va)$ from \eref{general_doa_estimator} are of the form \eref{unit-min-sum-cos} (up to a constant).
No closed form solution to minimize or maximize \eref{unit-min-sum-cos} exists, but MM optimization is applicable~\cite{Togami:2020tp}.
A quadratic surrogate for~\eref{unit-min-sum-cos} can be obtained from the following proposition~\cite{Yamaoka:2019wb}.
\begin{proposition}[Cosine Surrogate~\cite{Yamaoka:2019wb}]
  \label{proposition:cosine_surrogate}
  Let $\theta,\theta_0\in\R$, $z_0 = \underset{z\in\Z}{\arg\min}\ \left| \theta_0 + 2\pi z\right|$, and $\phi_0 = \theta_0 + 2 \pi z_0$.
  Then, the following inequality holds, with equality when $\theta = \theta_0$,
  \begin{equation}
    -\cos \theta \leq  \half \sinc(\phi_0)(\theta + 2\pi z_0)^2 - \cos \phi_0 - \half\, \phi_0 \sin \phi_0.
  \end{equation}
\end{proposition}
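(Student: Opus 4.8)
The plan is to recognize the claimed bound as a supporting-line (tangent) inequality for a convex function, obtained after a change of variable. First I would remove the integer shift: since $\cos$ has period $2\pi$ and $z_0\in\Z$, we have $-\cos\theta=-\cos\psi$ with $\psi:=\theta+2\pi z_0$, and by the definition of $z_0$ the reduced angle $\phi_0=\theta_0+2\pi z_0$ lies in $[-\pi,\pi]$. Using $\half\phi_0\sin\phi_0=\half\sinc(\phi_0)\phi_0^2$, the assertion is equivalent to $\cos\psi\ge\cos\phi_0-\half\sinc(\phi_0)(\psi^2-\phi_0^2)$ for all $\psi\in\R$. Every quantity on the right depends on $\phi_0$ only through $\cos\phi_0$, $\sinc\phi_0$ and $\phi_0^2$, and both sides are even in $\psi$, so I would assume $\phi_0\in[0,\pi]$ and $\psi\ge0$ without loss of generality.

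Next I would substitute $x=\psi^2\ge0$ and $x_0=\phi_0^2\in[0,\pi^2]$ and set $G(x)=\cos\sqrt{x}$. A direct computation gives $G'(x)=-\half\sinc(\sqrt{x})$, hence $G'(x_0)=-\half\sinc(\phi_0)$, and the target inequality becomes exactly $G(x)\ge G(x_0)+G'(x_0)(x-x_0)$; that is, $G$ must lie above its tangent line $T$ at $x_0$. The key structural fact is that $G$ is convex on $[0,\pi^2]$: differentiating once more reduces $G''(x)\ge0$ to $\sinc$ being nonincreasing on $(0,\pi)$, which follows from $u\cos u-\sin u<0$ there, and this in turn I would obtain from $t(u)=u\cos u-\sin u$ with $t(0)=0$ and $t'(u)=-u\sin u<0$. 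On $[0,\pi^2]$ convexity immediately yields $T(x)\le G(x)$, with equality at $x=x_0$, i.e. at $\theta=\theta_0$.

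The hard part is the region $x>\pi^2$, where $G$ is no longer convex (the sinc derivative changes sign) and the tangent argument does not extend directly. Here I would exploit two facts already in hand: evaluating the convex-region inequality at the right endpoint gives $T(\pi^2)\le G(\pi^2)=\cos\pi=-1$, and the tangent slope $G'(x_0)=-\half\sinc(\phi_0)$ is nonpositive because $\phi_0\in[0,\pi]$. Thus $T$ is nonincreasing, so for every $x>\pi^2$ we have $T(x)\le T(\pi^2)\le -1\le\cos\sqrt{x}=G(x)$. Combining the two ranges gives $T\le G$ on all of $[0,\infty)$, which is the desired inequality, and tangency at $x_0$ records the stated equality at $\theta=\theta_0$. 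The only real subtlety to get right is this boundary handoff at $x=\pi^2$; everything else is routine differentiation.
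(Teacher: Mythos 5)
Your argument is correct, and note that the paper itself contains no proof of this proposition: it is imported as a known result from~\cite{Yamaoka:2019wb}, so there is no in-paper derivation to compare against line by line. Your proof is the natural self-contained one and matches in spirit the standard argument behind this auxiliary function: after the periodic shift $\psi=\theta+2\pi z_0$ and the identity $\half\,\phi_0\sin\phi_0=\half\sinc(\phi_0)\phi_0^2$, the claim is exactly the tangent-line inequality $G(x)\ge G(x_0)+G'(x_0)(x-x_0)$ for $G(x)=\cos\sqrt{x}$ at $x_0=\phi_0^2$, and $G$ is convex precisely on $[0,\pi^2]$, which is where $x_0$ lands thanks to the minimality defining $z_0$ --- the one place that hypothesis is genuinely needed, and you invoke it correctly. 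Your treatment of the nonconvex tail $x>\pi^2$ is the step most often glossed over and you get it right: the tangent line $T$ has slope $-\half\sinc(\phi_0)\le 0$ because $\phi_0\in[0,\pi]$ after the evenness reduction, so $T(x)\le T(\pi^2)\le G(\pi^2)=-1\le G(x)$ there. The degenerate endpoints are consistent with your argument as well: $\phi_0=0$ gives $\sinc(0)=1$ and reduces to the classical $\cos\psi\ge 1-\psi^2/2$, while $\phi_0=\pm\pi$ gives a horizontal tangent at height $-1$ (and the tie in the $\arg\min$ at $\theta_0=\pm\pi$ is harmless since only $\phi_0^2$, $\cos\phi_0$, and $\sinc(\phi_0)$ enter the bound). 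Equality at $\theta=\theta_0$ is precisely tangency at $x=x_0$. I see no gap.
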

The MM method leads to the following iterative updates for $\vq$
\begin{equation}
  \vq_{t} \gets \underset{\vq\in\Sph^2}{\arg\min}\ \sum_{p\in\calP} \hat{u}_{p}(\vq_{t-1}) (\hat{\psi}_{p}(\vq_{t-1}) - \vb_{p}^\top \vq)^2,
  \elabel{mm_quad_update}
\end{equation}
where the previous iterate $\vq_{t-1}$ is used to compute the quantities,
\begin{align}
  \hat{\psi}_p(\hat{\vq}) & \gets \psi_p + \pi  + 2\pi\, \underset{z\in\Z}{\arg\min}\left| \psi_p + \pi - \vb_p^\top \hat{\vq} + 2\pi z\right|, \elabel{umsc_psi} \\
  \hat{u}_p(\hat{\vq}) & \gets u_p \sinc(\hat{\psi}_p^{(t)} - \vb_p^\top \hat{\vq}).
  \elabel{umsc_u}
\end{align}
The update \eref{mm_quad_update} involves a quadratically constrained quadratic minimization.
This is known as a generalized trust region sub-problem, and, albeit non-convex, its global minimum can be computed efficiently as follows~\cite{MORE:1993dy}.
Rewrite~\eref{mm_quad_update} as a quadratic form,
\begin{align}
  \vq_t \gets \underset{\vq\in\Sph^d}{\arg\min}\ \vq^\top \mD \vq - 2\vv^\top \vq,\quad
  \mD\in\R^{d\times d}, \vv\in \R^d.
  \elabel{gtrs}
\end{align}
By the method of Lagrange multipliers, the solution is $\vq_t = \hat{\vq}(\mu_0)=(\mD + \mu_0 \mI)^{-1}\vv$, where $\mu_0$ is the unique zero of $\|\hat{\vq}(\mu)\|^2 - 1$ larger than $-\lambda_{\min}$, with $\lambda_{\min}$ the smallest eigenvalue of $\mD$.
The zero can be efficiently found by Newton-Raphson or bisection, and working in the eigenspace of $\mD$ (see~\cite{MORE:1993dy,Togami:2020tp} for details).

\section{MM Algorithms to Refine DOA Estimators}
\seclabel{algorithm}

We start by stating a general formulation covering all the algorithms of \sref{cov_based_doa}.
Then, we introduce two surrogate functions for the generalized objective function, one quadratic, and one linear.

\subsection{Generalized Cost Function}

Instead of the maxima of \eref{general_doa_estimator}, we propose to find the local minima of
\begin{align}
  \calG(\vq) = \left[ \frac{1}{K} \sum\nolimits_k \left(\va_k^\H(\vq) \mV_k \va_k(\vq)\right)^s \right]^{\nicefrac{1}{s}}, \quad \vq \in \Sph^2,
  \elabel{proposed_objective}
\end{align}
with $s \in (-\infty, 1)\setminus_{\{0\}}$.
The function $M_s(\vy) = \left[ \frac{1}{K}\sum_k y_k^s \right]^{\nicefrac{1}{s}}$ with $y_k \geq 0, \forall k$, is a generalized power-mean~\cite{Xu:2019ek}.
This choice is motivated in part because it allows to formulate all the methods of \sref{cov_based_doa} by the same objective, and in part to explore new ways of combining multiple frequency bands.
For example, when $s \to -\infty$, only the frequency sub-band with largest power is considered.
We can show that with a careful choice of $\mV_k$ and $s$, we can recover all the methods of \sref{cov_based_doa}.

Those algorithms from \sref{cov_based_doa} that aim at maximizing $\va_k^\H(\vq)\mC_k\va_k(\vq)$ can be turned to minimization problems by considering $\mV_k = P_k - \mC_k$, with $P_k$ a positive constant.
By the Gershgorin circle theorem, choosing $P_k$ as the largest row-sum of $\mC_k$ ensures that $\mV_k$ is positive semi-definite.
Methods from \sref{cov_based_doa} that minimize $\va_k^\H(\vq)\mC_k\va_k(\vq)$, e.g. MUSIC-like, are readily covered with $\mV_k = \mC_k$.

Starting from a local estimate obtained from a coarse grid search, we perform local optimization by MM with a surrogate function $Q_v(\vq, \hat{\vq})$.
This leads to the following iterative algorithm,
\begin{align}
\vq_{t} & \gets \underset{\vq\in\Sph^2}{\arg\min}\ Q_v(\vq, \hat{\vq}_{t-1}),\quad v=1,2,
  \elabel{mm_update}
\end{align}
for $t=1,\ldots,T$.
By virtue of the MM algorithm, these updates are guaranteed to monotonically decrease the value of the objective of \eref{proposed_objective}~\cite{Lange:2016wp}.
The procedure is summarized in~\algref{doamm}.

\subsection{Quadratic Surrogate}

A quadratic surrogate function is obtained by successive application of the tangent inequality and Proposition~\ref{proposition:cosine_surrogate}.
\begin{theorem}
  Let $\calP = \{ (m, r)\,|\, 1\leq m\leq M, m < r \leq M \}$ be the set of distinct pairs, and for $p=(m,r)$, $\vDelta_p = \vd_m - \vd_r$, vector of coordinate differences of sensor $m$ and $r$, and
  \begin{align}
    u_{kp} & = \left| (\mV_k)_{mr} \right|, \quad
    \psi_{kp} = \arg\left((\mV_k)_{mr}\right).
  \end{align}
  Then, the following is a surrogate function of the objective of~\eref{proposed_objective},
  \begin{align}
    Q_2(\vq, \hat{\vq}) & = \vq^\top \mD(\hat{\vq}) \vq - 2 \vv(\hat{\vq})^\top \vq + \text{const.},
    \elabel{quadratic_surrogate}
  \end{align}
  where $\mD(\hat{\vq}) \in \R^{3\times 3}$ and $\vv(\hat{\vq})\in \R^3$ are defined as
  \begin{align}
    \mD(\hat{\vq}) & = \sum_{p\in\calP} \xi_p(\hat{\vq}) \vDelta_p \vDelta_p^\top, \quad \vv(\hat{\vq}) = \sum_{p\in\calP} \gamma_p(\hat{\vq}) \vDelta_p,
  \end{align}
  with weights
  \begin{align}
    \xi_p(\hat{\vq}) & = \sum\nolimits_k \omega_k^2 \beta_k(\hat{\vq}) \hat{u}_{kp}(\hat{\vq}), \\
    \gamma_p(\hat{\vq}) & = \sum\nolimits_k \omega_k \beta_k(\hat{\vq}) \hat{u}_{kp}(\hat{\vq}) \hat{\psi}_{kp}(\hat{\vq}),
  \end{align}
  where $\hat{\psi}_{kp}(\hat{\vq})$ and $\hat{u}_{kp}(\hat{\vq})$ are given by \eref{umsc_psi} and \eref{umsc_u} (with $\vb_p$ replaced by $\omega_k \vDelta_p$), respectively, and
  \begin{equation}
    \beta_k(\hat{\vq}) = \frac{\frac{1}{K} (\va_k^\H(\hat{\vq}) \mV_k \va_k(\hat{\vq}))^{s - 1}}{\left(\frac{1}{K} \sum_{k^\prime} (\va_{k^\prime}^\H(\hat{\vq}) \mV_{k^\prime} \va_{k^\prime}(\hat{\vq}))^s \right)^{1 - \nicefrac{1}{s}}}.
    \elabel{beta}
  \end{equation}
\end{theorem}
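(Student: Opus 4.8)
The plan is to construct $Q_2$ by a two-stage majorization, peeling off the outer power mean first and then handling the cosines. As a preliminary reduction, I would write $g_k(\vq) := \va_k^\H(\vq)\mV_k\va_k(\vq)$ and expand this Hermitian form. The diagonal entries contribute a $\vq$-independent constant, while each off-diagonal pair $(m,r)$ and its conjugate $(r,m)$ combine, using $\mV_k^\H = \mV_k$, into $2 u_{kp}\cos(\psi_{kp}-\omega_k\vDelta_p^\top\vq)$ with $p=(m,r)$. Hence, up to a constant, $g_k(\vq)$ is exactly a unit-min-sum-cos expression in $\vq$ with frequency vectors $\omega_k\vDelta_p$, which is precisely what lets Proposition~\ref{proposition:cosine_surrogate} apply in the second stage.

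Stage one linearizes the power mean. Writing $\calG(\vq) = M_s(g_1(\vq),\ldots,g_K(\vq))$, the key fact is that $M_s$ is concave on $\R_+^K$ for $s<1$; this follows from its positive $1$-homogeneity together with the superadditivity (reverse Minkowski inequality) of $(\sum_k y_k^s)^{\nicefrac{1}{s}}$ for exponent $s<1$. Concavity gives the tangent-plane upper bound $M_s(\vy)\le M_s(\hat{\vy})+\nabla M_s(\hat{\vy})^\top(\vy-\hat{\vy})$, which I would evaluate at $\vy=(g_k(\vq))_k$ and $\hat{\vy}=(g_k(\hat{\vq}))_k$. A direct differentiation shows that $\partial M_s/\partial y_k$ at $\hat{\vy}$ equals exactly $\beta_k(\hat{\vq})$ of \eref{beta}, and $\beta_k(\hat{\vq})\ge0$ because $M_s$ is nondecreasing in each argument. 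This yields $\calG(\vq)\le\mathrm{const}+\sum_k\beta_k(\hat{\vq})\,g_k(\vq)$, a nonnegatively weighted sum of the cosine expressions, with equality at $\vq=\hat{\vq}$.

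Stage two applies the cosine surrogate term by term. Since each $g_k$ carries a $+\cos$, I would first rewrite $\cos(\psi_{kp}-\omega_k\vDelta_p^\top\vq)=-\cos(\psi_{kp}+\pi-\omega_k\vDelta_p^\top\vq)$ so that Proposition~\ref{proposition:cosine_surrogate} applies with $\theta=\psi_{kp}+\pi-\omega_k\vDelta_p^\top\vq$ and $\theta_0$ its value at $\hat{\vq}$. With this $\pi$-shift, the minimizing integer $z_0$ and $\phi_0=\theta_0+2\pi z_0$ are precisely those encoded in \eref{umsc_psi}--\eref{umsc_u} (with $\vb_p=\omega_k\vDelta_p$), so that $\theta+2\pi z_0=\hat{\psi}_{kp}(\hat{\vq})-\omega_k\vDelta_p^\top\vq$ and $u_{kp}\sinc(\phi_0)=\hat{u}_{kp}(\hat{\vq})$. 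Multiplying each surrogate by the nonnegative weight $\beta_k(\hat{\vq})$, restoring the factor $2M^{-1}$ carried by $g_k$, and summing over $k,p$ turns the bound into $\mathrm{const}+M^{-1}\sum_{k,p}\beta_k(\hat{\vq})\,\hat{u}_{kp}(\hat{\vq})(\hat{\psi}_{kp}(\hat{\vq})-\omega_k\vDelta_p^\top\vq)^2$, with the factor $\half$ of the surrogate cancelling the $2$. Expanding the square via $(\vDelta_p^\top\vq)^2=\vq^\top\vDelta_p\vDelta_p^\top\vq$ and collecting the quadratic and linear coefficients reproduces $\mD(\hat{\vq})$, $\vv(\hat{\vq})$, and the stated weights $\xi_p,\gamma_p$, giving \eref{quadratic_surrogate}.

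I expect the main obstacle to be the concavity of the power mean, which underpins stage one: both the upper-bound direction of the tangent inequality and the sign $\beta_k(\hat{\vq})\ge0$ needed to preserve the inequality through stage two rely on it, and it must be argued uniformly across the full range $s\in(-\infty,1)\setminus\{0\}$, negative $s$ included. The remaining work is bookkeeping: tracking the overall positive factor $M^{-1}$ and the additive constants absorbed at each step (so that the stated $Q_2$ is $M$ times a genuine majorizer, up to a constant, which leaves the minimizer of \eref{mm_update} unchanged), and matching the $+\cos$/$-\cos$ sign to the $+\pi$ convention in \eref{umsc_psi}. Since the tangent inequality is tight at $\vy=\hat{\vy}$ and the cosine surrogate is tight at $\theta=\theta_0$, both equalities occur at $\vq=\hat{\vq}$, so the composite is a valid surrogate and the update \eref{mm_update} monotonically decreases $\calG$.
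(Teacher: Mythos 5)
Your proposal is correct and follows essentially the same route as the paper: linearize the concave power mean $M_s$ via the tangent inequality to obtain the weights $\beta_k(\hat{\vq})$, expand the Hermitian form into a sum of cosines over sensor pairs, majorize each cosine with Proposition~\ref{proposition:cosine_surrogate} (the $+\pi$ shift handling the sign), and collect terms into $\mD$ and $\vv$. Your additional care about why $M_s$ is concave, why $\beta_k\ge 0$, and the overall positive factor $M^{-1}$ not affecting the minimizer fills in details the paper leaves implicit, but the argument is the same.
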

\begin{proof}
  For $s \in (-\infty,  1)$, one can show that $M_s(\vy)$ is concave~\cite{Xu:2019ek} and thus, the tangent inequality may be applied, giving,
  \begin{align}
    M_s(\vy) & \leq \nabla M_s(\hat{\vy})^\top \vy + \text{const}.
  \end{align}
  Replacing $\hat{\vy}$ by the terms of~\eref{proposed_objective} evaluated at $\hat{\vq}$ gives $\beta_k(\hat{\vq})$ of \eref{beta}.

  Due to the hermitian symmetry of $\mV_k$, the imaginary part of the symmetric terms in the quadratic form cancels and it can be expressed as a sum of cosine.
  This sum of cosine is majorized as in \sref{unit_min_sum_cos},
  \begin{align}
    \va_k^\H(\vq) \mV_k \va_k(\vq) & = \frac{\trace(\mV_k)}{M} + \frac{2}{M}\sum_{p\in\calP} u_{kp} \cos(\psi_{kp} - \omega_k \vDelta_p^\top \vq) \nonumber \\
                                   & \leq \sum_{p\in\calP} \hat{u}_{kp}(\hat{\vq}) (\hat{\psi}_{kp}(\hat{\vq})- \omega_k \vDelta_p^\top \vq)^2 + \text{const.}, \nonumber
  \end{align}
  where $\hat{\psi}_{kp}$ and $\hat{u}_{kp}$ are given by \eref{umsc_psi} and \eref{umsc_u}, respectively.

  Finally, notice that the terms in $\vq$ only have a scalar dependency on $k$.
  Expanding the quadratic terms, inverting the order of the sums, and completing the squares yields the final weights.
\end{proof}
Our first algorithm is obtained by solving \eref{mm_update} with $Q_2(\vq, \hat{\vq})$.
The minimizer of $Q_2(\vq, \hat{\vq})$ is computed exactly as that of \eref{gtrs}.

\begin{figure*}
    \centering
    \includegraphics[width=0.95\linewidth]{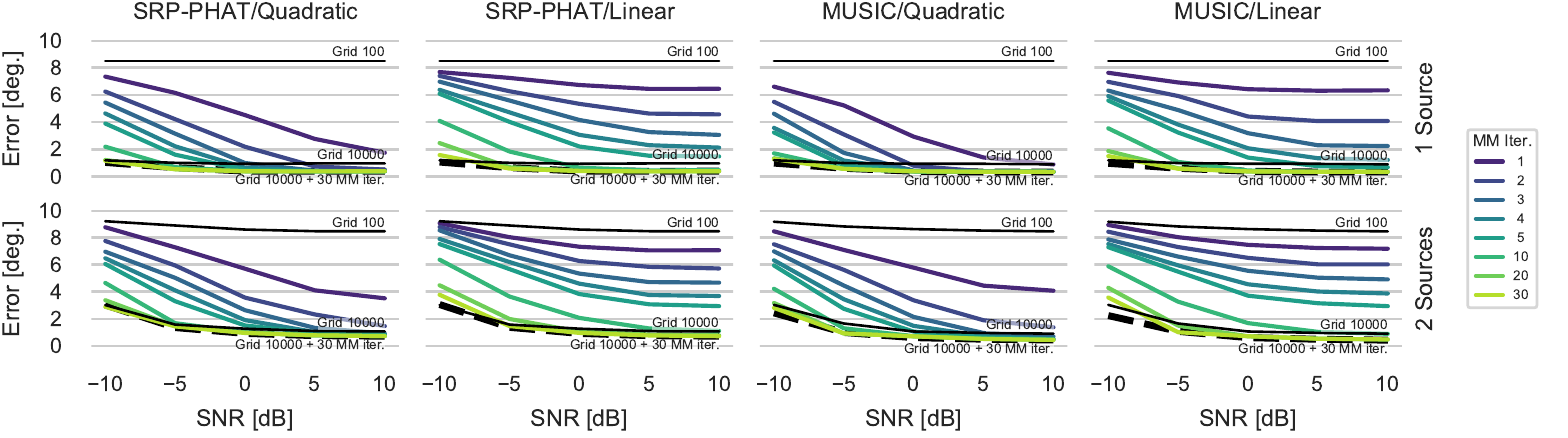}
    \caption{Median error in degrees as a function of the SNR. The thin black lines are for different grid sizes. The colored lines are for different number of iterations of the MM refinement algorithm starting from the 100-points grid estimate. The black thick dashed line is for 30 iterations of refinement started from the 10000-points grid estimate.}
    \flabel{grid_effect}
    \vspace{-0mm}
\end{figure*}

\subsection{Linear Surrogate}

Because a quadratic function with a norm constraint can be majorized with a linear function~\cite{Sun:2017hb}, we can derive an even simpler algorithm.
\begin{theorem}
  Let $\vq$ be such that $\|\vq\| = 1$, and
  \begin{align}
    C(\hat{\vq}) = \left(\max\nolimits_m \ \xi_p(\hat{\vq})\right) \; \lambda_{\max}\left(\sum\nolimits_p \vDelta_p\vDelta_p^\top \right)
    \elabel{lin_major_constant}
  \end{align}
  where $\lambda_{\max}(\,.\,)$ is the largest eigenvalue operator. Then,
  \begin{align}
    Q_1(\vq, \hat{\vq}) = 2 ((\mD(\hat{\vq}) - C(\hat{\vq})\mI) \hat{\vq} - \vv(\hat{\vq}))^\top\vq + \text{const.},
    \elabel{linear_surrogate}
  \end{align}
  is a surrogate function of~\eref{proposed_objective}.
\end{theorem}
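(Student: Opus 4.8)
The plan is to obtain $Q_1$ as a \emph{second-stage} majorizer built on top of $Q_2$. Since the preceding quadratic-surrogate theorem already shows that $Q_2(\vq,\hat{\vq})$ of~\eref{quadratic_surrogate} majorizes the objective of~\eref{proposed_objective} and is tight at $\vq=\hat{\vq}$, it suffices to produce a linear function $Q_1(\vq,\hat{\vq})\geq Q_2(\vq,\hat{\vq})$ that agrees with $Q_2$ at $\hat{\vq}$; composing the two majorizations then yields a valid surrogate. First I would isolate the quadratic term $\vq^\top\mD(\hat{\vq})\vq$ and use the constraint $\|\vq\|=1$ to write, for any scalar $C$,
\[
\vq^\top\mD(\hat{\vq})\vq = \vq^\top(\mD(\hat{\vq})-C\mI)\vq + C\|\vq\|^2 = \vq^\top(\mD(\hat{\vq})-C\mI)\vq + C.
\]
Choosing $C$ so that $\mD(\hat{\vq})-C\mI\preceq 0$ makes $\vq\mapsto\vq^\top(\mD(\hat{\vq})-C\mI)\vq$ concave, so the tangent inequality at $\hat{\vq}$ gives the linear upper bound
\[
\vq^\top(\mD(\hat{\vq})-C\mI)\vq \leq 2\hat{\vq}^\top(\mD(\hat{\vq})-C\mI)\vq - \hat{\vq}^\top(\mD(\hat{\vq})-C\mI)\hat{\vq},
\]
which is tight at $\vq=\hat{\vq}$. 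Substituting into~\eref{quadratic_surrogate}, absorbing $C$ and all $\hat{\vq}$-only terms into the constant, and using the symmetry of $\mD(\hat{\vq})$ to move the shifted matrix onto $\hat{\vq}$, produces exactly $Q_1(\vq,\hat{\vq}) = 2((\mD(\hat{\vq})-C(\hat{\vq})\mI)\hat{\vq}-\vv(\hat{\vq}))^\top\vq + \text{const.}$

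The crux is verifying that the explicit constant $C(\hat{\vq})$ of~\eref{lin_major_constant} is admissible, i.e. $C(\hat{\vq})\geq\lambda_{\max}(\mD(\hat{\vq}))$, so the negative-semidefiniteness used above actually holds. I would argue this in two steps. First, the weights are non-negative: $\beta_k(\hat{\vq})\geq 0$ as a ratio of non-negative power-mean terms (recall $s<1$ and $\va_k^\H\mV_k\va_k\geq 0$), and $\hat{u}_{kp}(\hat{\vq}) = u_{kp}\sinc(\hat{\psi}_{kp}-\omega_k\vDelta_p^\top\hat{\vq})\geq 0$ because the phase wrapping in~\eref{umsc_psi} forces its argument into $[-\pi,\pi]$, where $\sinc$ is non-negative; hence $\xi_p(\hat{\vq})=\sum_k\omega_k^2\beta_k\hat{u}_{kp}\geq 0$. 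Second, since each $\vDelta_p\vDelta_p^\top\succeq 0$, scaling by $\xi_p\leq\max_p\xi_p$ preserves the Loewner order term-by-term, and summing gives
\[
\mD(\hat{\vq}) = \sum_{p\in\calP}\xi_p(\hat{\vq})\,\vDelta_p\vDelta_p^\top \;\preceq\; \Big(\max_{p\in\calP}\xi_p(\hat{\vq})\Big)\sum_{p\in\calP}\vDelta_p\vDelta_p^\top,
\]
so that $\lambda_{\max}(\mD(\hat{\vq}))\leq C(\hat{\vq})$ as required (I read the $\max$ in~\eref{lin_major_constant} as ranging over $p\in\calP$, the subscript $m$ there appearing to be a typo).

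The only genuine obstacle is this last spectral bound on $C(\hat{\vq})$, since it hinges entirely on the sign of the weights; the remainder is the standard concave-majorization-of-a-quadratic-on-the-sphere argument of~\cite{Sun:2017hb}. I would then close by confirming the two MM conditions explicitly: the majorization $Q_1\geq Q_2$ holds for \emph{all} $\vq\in\Sph^2$ (not merely locally), which is automatic once $\mD(\hat{\vq})-C(\hat{\vq})\mI\preceq 0$, and the tangency $Q_1(\hat{\vq},\hat{\vq})=Q_2(\hat{\vq},\hat{\vq})$ holds because the tangent inequality is tight at $\hat{\vq}$; carrying the additive constants through the composition shows that $Q_1$ remains a bona fide surrogate of~\eref{proposed_objective}, whose minimizer over $\Sph^2$ is immediate since it only depends on $\vq$ linearly.
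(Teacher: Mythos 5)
Your proposal is correct and follows essentially the same route as the paper: establish $C(\hat{\vq})\mI \succeq \mD(\hat{\vq})$ from the non-negativity of the $\xi_p$, then linearize the quadratic term at $\hat{\vq}$ using the norm constraint $\vq^\top\vq=1$ (your tangent-inequality phrasing is equivalent to the paper's expansion of $(\vq-\hat{\vq})^\top(C\mI-\mD)(\vq-\hat{\vq})\geq 0$). Your additional verification that $\beta_k\geq 0$ and $\hat{u}_{kp}\geq 0$, and your reading of $\max_m$ as $\max_{p\in\calP}$, are both consistent with what the paper's own proof implicitly assumes.
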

\begin{proof}
  Dropping the dependency on $\hat{\vq}$ to simplify notation, we first show that $C\mI \succeq \mD$.
  For any $\vx$ such that $\|\vx\| = 1$, because $\xi_p\geq 0$,
  \begin{align}
    \vx^\top \left(\sum_p \xi_p \vDelta_p\vDelta_p^\top\right) \vx & \leq (\max\nolimits_p \ \xi_p) \; \lambda_{\max}\left(\sum\nolimits_p \vDelta_p\vDelta_p^\top\right). \nonumber
  \end{align}
  Then, we have the following majorization, with equality for $\vq=\hat{\vq}$,
  \begin{align}
    \vq^\top \mD \vq & \leq C \vq^\top \vq - 2 \hat{\vq}^\top (C\mI - \mD) \vq + \hat{\vq}^\top (C\mI - \mD)\hat{\vq}.
  \end{align}
  Due to the norm constraint $\vq^\top\vq = 1$, the quadratic term becomes constant and we obtain~\eref{linear_surrogate}.
\end{proof}
Now, the corresponding MM update is given by solving \eref{mm_update} for $v=1$.
The method of Lagrange multipliers yields,
\begin{align}
  \vq_t \gets \frac{\vv(\vq_{t-1}) - \mD(\vq_{t-1}) \vq_{t-1} + C(\vq_{t-1}) \vq_{t-1}}{\|\vv(\vq_{t-1}) - \mD(\vq_{t-1}) \vq_{t-1} + C(\vq_{t-1}) \vq_{t-1}\|},
  \elabel{linear_update}
\end{align}
which is indeed very simple.
This surrogate function has several advantages.
It does not require to solve any linear system.
In addition, the eigenvalue in~\eref{lin_major_constant} only depends on the array architecture and can be computed offline.
The price to pay is the slower convergence speed due to the extra majorization, as shown in \sref{experiments}.

\begin{algorithm}[t]
\SetKwInOut{Input}{Input}\SetKwInOut{Output}{Output}
\Input{$\mV_k$, $k=1,\ldots,K$}
\Output{$\vq_\ell$, $\ell=1,\ldots,L$}
\DontPrintSemicolon
Evaluate the cost function~\eref{proposed_objective} on a rough grid\;
Pick $L$ local minimizers $\vq_{\ell}$ for $\ell=1,\ldots,L$\;
\For{$\ell$ $\leftarrow 1$ \KwTo $L$}{
  \For{loop $\leftarrow 1$ \KwTo Max Iterations}{
    Update $\vq_\ell$ according to~\eref{gtrs} or~\eref{linear_update}
  }
}
\vspace{2mm}
\caption{Proposed DOA with MM refinement algorithm.}
\label{alg:doamm}
\end{algorithm}

\subsection{Computational Complexity}

The evaluation of $\hat{u}_{kp}, \hat{\psi}_{kp}, \beta_k, \xi_p, \gamma_p$ all require $K \times |\calP| = O(M^2 \times K)$ operations.
Thus, the overall complexity for both algorithms is $O(T M^2 K)$, $T$ being the number of MM iterations.
The linear surrogate provides some minor computational savings, but, as we will see, is slower to converge.
For comparison, the computation of the covariance matrices is $O(M^2 K N)$, subspace decomposition in MUSIC-like methods is $O(M^3 K)$, and the initial search on a grid of size $G$ is $O(G M^2 K)$.
Thus, the refinement algorithm represents a small fraction of the total computations.

\begin{table}
  \centering
  \caption{Median error in degrees for different values of $s$.}
  \resizebox{\linewidth}{!}{%
  \begin{tabular}{@{}llrrrrrrrr@{}}
    \toprule
    Srcs & \multicolumn{1}{r}{$s=$} &  -10.0 &  -3.0  &  -1.0  &  -0.5  &   0.2  &   0.5  &   0.8  &   1.0  \\
    \midrule
     1 & MUSIC &   0.40 &   0.40 &   0.40 &   0.40 &   0.40 &   0.40 &   0.40 &   0.40 \\
       & SRP-PHAT &   0.50 &   0.43 &   0.41 &   0.40 &   0.40 &   0.40 &   0.40 &   0.40 \\
     2 & MUSIC &   0.79 &   0.70 &   0.62 &   0.60 &   0.57 &   0.58 &   0.60 &   0.59 \\
       & SRP-PHAT &   0.79 &   \textbf{0.69} &   0.73 &   0.74 &   0.86 &   0.92 &   0.96 &   1.07 \\
     3 & MUSIC &   1.68 &   1.31 &   1.11 &   1.06 &   0.99 &   0.98 &   0.98 &   0.98 \\
       & SRP-PHAT &   1.43 &   \textbf{1.23} &   1.37 &   1.38 &   1.42 &   1.48 &   1.66 &   1.79 \\
       \bottomrule
  \end{tabular}
}
  \tlabel{value_s}
\end{table}

\begin{table}
  \centering
  \caption{Median runtimes in second with the quadratic surrogate.}
  \resizebox{0.9\linewidth}{!}{%
    \begin{tabular}{@{}lrrcccc@{}}
    \toprule
     & & & \multicolumn{2}{c}{SRP-PHAT} & \multicolumn{2}{c}{MUSIC} \\
     \cline{4-7}
      Description & Grid & MM Iter.  & 1 src & 2 src & 1 src & 2 src\\
    \midrule
      \textbf{proposed method} & 100 & 30 & 0.35 & 0.42 & 0.27 & 0.37 \\
      fine grid-search & 10000 & 0 & 4.55 & 4.58 & 4.57 & 4.48\\
    \bottomrule
  \end{tabular}
}
\tlabel{runtimes}
\end{table}

\section{Experiments}
\seclabel{experiments}

\subsection{Experiment on Simulated Reverberant Speech}

We use the \texttt{pyroomacoustics} toolbox~\cite{Scheibler:2018di} to simulate a cubic room of \SI{10}{\meter} side.
The reverberation time is set to \SI{500}{\milli\second}.
We use an array with twelve microphones whose locations are sub-sampled from the Pyramic array geometry~\cite{Scheibler:2018da} and placed at the center of the room.
The simulation is repeated for 100 random locations of the sources at \SI{3}{\meter} from the center of the array.
Uncorrelated additive white Gaussian noise added to the microphone inputs.
The DOAs are estimated using SRP-PHAT and MUSIC with different grid sizes and number of iterations of the refinement procedure.
The evaluation is done with respect to the permutation producing the smallest average error in terms of great circle distance.

\ffref{grid_effect} shows the median error as a function of the signal-to-noise ratio for the grid size of 100 and 10000, and the for successive refinement iterations starting from the former estimate.
We observe that 10 to 20 iterations is sufficient for the refinement procedure to produce better estimates regardless of the SNR or number of sources.
For higher SNR, the convergence is faster.
The linear surrogate requires slightly more iterations.
We also see that starting from the finer grid does not significantly improve the final estimate.
In \tref{value_s}, we compare the median error for different values of $s$.
For MUSIC, values of $s\geq-1$ seem appropriate in all conditions.
For SRP-PHAT, $s=-3$ produces the best results for two and three sources.
\tref{runtimes} shows between 11 to 17 times speed-up of the median runtime.

\begin{figure}
    \centering
    \includegraphics[width=0.8\linewidth]{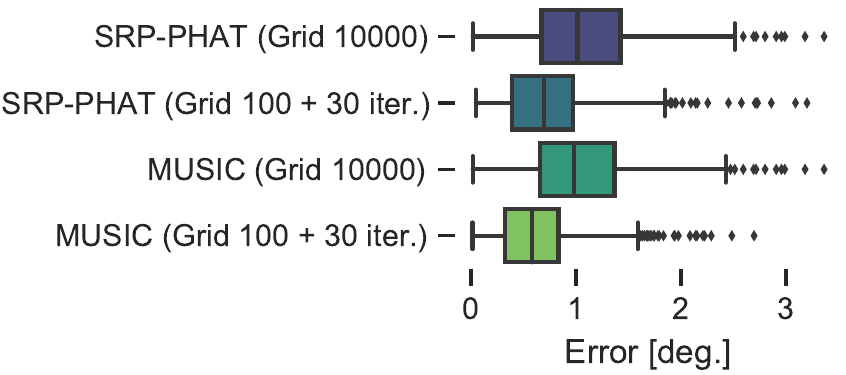}
    \caption{Error distribution on recorded data from the Pyramic dataset}
    \flabel{pyramic_experiment}
    \vspace{-0mm}
\end{figure}

\subsection{Experiment on the Pyramic Dataset}

We validate the proposed method on the Pyramic dataset of anechoic recordings with a 48-channel three-dimensional microphone array~\cite{Scheibler:2018da}.
We use one speech sample recorded with azimuth of \SI{0}{\degree} to \SI{358}{\degree} in \SI{2}{\degree} increments and colatitudes of \SI{77}{\degree}, \SI{89}{\degree}, and \SI{106}{\degree}.
Compare a using a 10000-points grid only, and a 100-points grid followed by 30 iterations of the refinement with the quadratic surrogate.
Box-plots of The DOA estimation error are shown in \ffref{pyramic_experiment}.
As in simulation, the rough grid followed by refinement consistently outperforms the fine grid.
Since one iteration of refinement has the same cost as a grid point evaluation, the proposed method, in this case, requires theoretically around $76\times$ less computations, ignoring the initial cost of computing $\mV_k$'s, which is the same in both cases.

\section{Conclusion}
\seclabel{conclusion}

We presented a refinement algorithm for classical DOA estimators such as SRP or MUSIC.
The algorithm relies on the MM technique to perform local improvements to the cost function starting from an initial estimate produced with a rough grid.
We proposed two different algorithms based on a quadratic and linear surrogate of the cost function, respectively.
The linear surrogate is especially attractive due to the simplicity of its implementation.
We find that the final median accuracy does not depend on the resolution of the initial grid.
This means that we can dramatically reduce the computational complexity by replacing fine grid-search by local optimization.
We demonstrated lower median error and 11--17 times practical runtime improvement for one to three sources on reverberant simulated data and an anechoic dataset of recordings.

\clearpage

\bibliographystyle{IEEEtran}
\bibliography{IEEEabrv,refs}

\end{document}